\newcounter{Codeline}
\newcommand{\Newcodeline}{\setcounter{Codeline}{1}}
\newcommand{\Cl}{{\theCodeline}: \addtocounter{Codeline}{1}}
\newcommand{\crm}{\\}
\newcommand{\Scomment}[1]{\ensuremath{/\ast} #1 \ensuremath{\ast/}}
\newcommand{\ot}{\ensuremath{\leftarrow}}
\authorrunning{T. Izumi et.al.}
\titlerunning{Minimum Certificate Dispersal with Tree Structures}
\begin{document}

\title{Minimum Certificate Dispersal \\ with Tree Structures
}
\author{Taisuke~IZUMI\inst{1} \and Tomoko~IZUMI\inst{2} \and Hirotaka~ONO\inst{3} \and Koichi~WADA\inst{1}}
\institute{Graduate School of Engineering, Nagoya Institute of Technology, \\ Nagoya, 466-8555, Japan.\\
\email{\{t-izumi, wada\}@nitech.ac.jp}
\and
College of Information Science and Engineering, Ritsumeikan University, \\ Kusatsu, 525-8577 Japan.\\
\email{izumi-t@fc.ritsumei.ac.jp}
\and
Faculty of Economics, Kyushu University, \\ Fukuoka, 812-8581, Japan.\\ 
\email{hirotaka@en.kyushu-u.ac.jp}
} 
\maketitle

\begin{abstract}
Given an $n$-vertex graph $G=(V,E)$ and a set $R\subseteq \{\{x,y\}\mid x,y\in
 V\}$ of requests, we consider to assign a set of edges to each vertex
 in $G$ so that for every request $\{u, v\}$ in $R$ the union of the edge sets 
assigned to $u$ and $v$ contains a path from $u$ to $v$. 
{\em The Minimum Certificate Dispersal Problem} (MCD) is defined as one 
to find an assignment that minimizes the sum of the cardinality of 
the edge set assigned to each vertex. This problem has been shown to 
be LOGAPX-complete for the most general setting, and
 APX-hard and $2$-approximable in polynomial time for dense request
 sets, where $R$ forms a clique. In this  
 paper, we investigate the complexity of MCD with sparse (tree)
 structures. We first show that MCD is APX-hard when $R$ is a
 tree, even a star. We then explore the problem from the viewpoint of
 the {\em maximum degree} $\Delta$ of the tree: MCD for tree request  
set with constant $\Delta$ is solvable in polynomial time, while 
that with $\Delta=\Omega(n)$ is $2.56$-approximable in polynomial time
 but hard to approximate within $1.01$ unless P$=$NP. As for the 
structure of $G$ itself, we show that the problem can be 
 solved in polynomial time if $G$ is a tree. 
\end{abstract}

\section{Introduction}
\noindent{\bf Background and Motivation.}\ \ 
Let $G = (V,E)$ be a graph and $R\subseteq \{\{x,y\}\mid x,y\in
V\}$ be a set of pairs of vertices, which represents requests about
reachability between two vertices.  For given $G$ and $R$, we consider an assignment of a set of edges to each vertex in $G$. 
The assignment satisfies a request $\{u, v\}$ if the union of the edge sets assigned to $u$ and $v$ 
contains a path from $u$ to $v$. The {\em Minimum Certificate Dispersal Problem (MCD)} 
is the one to find the assignment satisfying all requests in $R$ that minimizes the sum of 
the cardinality of the edge set assigned to each vertex.

This problem is motivated by a requirement in public-key based security
systems, which are known as a major technique for supporting secure communication 
in a distributed system \cite{Capkun03,Gouda04,Gouda05,Hubaux01,Izumi10,Jung04,Zheng05,Zheng06}. 
The main problem of the
systems is to make each user's public key available to others in such a way
that its authenticity is verifiable. One of well-known approaches to solve this
problem is based on public-key certificates. A public-key certificate contains
the public key of a user $v$ encrypted by using the private key of another user $u$. If
a user $u$ knows the public key of another user $v$, user $u$ can issue a certificate
from $u$ to $v$. Any user who knows the public key of $u$ can use it to decrypt the
certificate from $u$ to $v$ for obtaining the public key of $v$. All certificates issued
by users in a network can be represented by a certificate graph: Each vertex
corresponds to a user and each directed edge corresponds to a certificate. When
a user $w$ has communication request to send messages to a user $v$ securely, $w$
needs to know the public key of $v$ to encrypt the messages with it. For satisfying
a communication request from a vertex $w$ to $v$, vertex $w$ needs to get vertex
$v$'s public-key. To compute $v$'s public-key, $w$ uses a set of certificates stored in
$w$ and $v$ in advance. Therefore, in a certificate graph, if a set of certificates
stored in $w$ and $v$ contains a path from $w$ to $v$, then the communication request
from $w$ to $v$ is satisfied. In terms of cost to maintain certificates, the total
number of certificates stored in all vertices must be minimized for satisfying all
communication requests.

The previous work mainly focuses on directed variants of MCD, in
which graph $G$ is directed. Jung et al. discussed MCD with a
restriction of available paths in \cite{Jung04} and proved that  
the problem is NP-hard. In their work, to assign edges to each vertex,
only the restricted paths that are given for each request is allowed to
be used. MCD with no restriction about available paths was first 
formulated in \cite{Zheng06}. This variant is also 
proved to be NP-hard even if the input graph is a strongly connected
directed graph. On the other hand, MCD for directed graphs with $R$ forming
a clique is polynomially solvable for bidirectional trees and rings, 
and Cartesian products of graphs such as meshes and hypercubes~\cite{Zheng06}. 

Based on these work, the (in)approximability of MCD for directed graphs 
has been studied from the viewpoint of the topological structure of 
$R$ (not $G$)~\cite{Izumi10}. Since MCD is doubly structured (one is
the graph $G$ itself and the other is the request structure $R$), 
the hardness of MCD depends not only on the topology of $G$ but also on
the one of $R$. In view of these observation, the (in)approximability of 
MCD for directed graphs is investigated for general case and $R$ forming
a clique, as a typical community structure. It was shown that the former
case is $O(\log |V|)$-approximable in polynomial time but has no
polynomial time algorithm whose approximation factor is better than
$0.2266 \log |V|$ unless P$=$NP. The latter case is $2$-approximable 
but has no polynomial time algorithm whose approximation factor is
better than $1.001$, unless P$=$NP. In \cite{Izumi10}, 
the undirected variant of MCD is also considered, and 
$1.5$-approximation algorithm for the case when $R$ forming a clique is
presented. 

These results naturally raise a new question:
For the hardness of approximation or constant-factor approximability, is
such a dense structure (i.e., clique) essential? For example, how is the
case when $R$ is sparse, e.g., a tree? 
This paper further investigates the (in)approximability of MCD when 
$R$ forms a tree, as another typical topology.



\noindent{\bf Our Contribution.}\ \ 
We investigate the complexity of MCD with tree structure. 
Here, we say ``with tree structure'' in two senses. 
One is the case when $R$ forms a tree, and the other is the case when
$G$ itself is a tree. 
One reason of this focus has already been mentioned above. 
Another reason is that a tree is a minimal connected structure; 
even if $G$ (resp., $R$) is not a tree, by solving MCD for $G'$, a
spanning tree of $G$ (resp., for a spanning tree $R'$ of $R$), 
we can obtain an upper bound on the optimal solution (resp., a lower bound
on the optimal solution) of the original MCD problem.  

For MCD with tree $R$, we show that the hardness and  
approximability depend on the {\em maximum degree} $\Delta$ of tree $R$:
MCD for tree $R$ with constant degree is solvable in polynomial time
while that with $\Omega(n)$ degree is APX-hard. 
As for MCD for tree $G$, we present a polynomial optimal algorithm. The
followings are summary of our contributions:  
\begin{itemize}
\item {\em $R$ is an arbitrary tree}: First we consider
      MCD for the 
      case when $R$ is a {\em star}. 
      Even in this simplest setting, MCD is shown to be APX-hard: MCD
      for undirected graph $G$ with sparse $R$ is still APX-hard. 
      Moreover, the reduction {\em to}
      the Steiner tree problem for unweighted graphs(STREE) leads 
      to an upper bound 1.28 on approximation
      ratio for MCD with star request sets. For 
      arbitrary tree $R$, it is shown that there 
      is a 2.56-approximate algorithm for MCD by utilizing the
      approximation algorithm for star $R$. 
\item {\em $R$ is a tree with $\Delta=O(\log |V|)$}: 
By using a similar analysis to arbitrary tree $R$, the upper bound of
      approximation ratio for MCD can be reduced to 2. In particular,
      if $R$ is a star with $\Delta = O(\log n)$ MCD is polynomially solvable.
\item {\em $R$ is a tree with constant degree}: This case is
      polynomially solvable. These imply that the hardness of MCD for
      tree $R$ heavily depends on its maximum degree. A key idea is 
      to define normal solutions. Our dynamic programming based 
      algorithm searches not the whole solution space but (much smaller)
      normal solution space. 
\item {\em $G$ is an arbitrary tree}: In this case also, a positive
      result is shown. For any request set $R$ (not restricted to a
      tree), our algorithm outputs an optimal solution in polynomial
      time. The algorithm exploits the polynomial time solvability of  
      VERTEX-COVER for bipartite graphs. 
\end{itemize}

\medskip

The remainder of the paper is organized as follows. In Section
\ref{sec:model}, we formally define the Minimum Certificate Dispersal
Problem (MCD). Section \ref{sec:star} shows 
the hardness and approximability of MCD with star request sets, 
and Section \ref{sec:tree} extends it to the approximability 
of MCD with tree request sets. In Section \ref{sec:consttree}, we present
a polynomial time algorithm that optimally solves MCD for tree request 
with constant degree. Section \ref{sec:graph} shows an optimal algorithm
for MCD with undirected tree graphs. Section \ref{sec:conclusion}
concludes the paper. 

\section{Minimum Certificate Dispersal Problem}\label{sec:model}

While the Minimum Certificate Dispersal (MCD) Problem is originally
defined for directed graphs, we deal with its undirected variant,
where the given graph is undirected. 
The difference between them is the meaning of assignment an edge to 
a vertex: In the standard MCD, an edge $(u, v)$ means
a certificate from $u$ to $v$. In the undirected variant of MCD, edge means a
bidirectional certificate from $u$ to $v$ and $v$ to $u$ which is
not separable. Since we treat the undirected variants of MCD throughout this paper, 
we simply refer those problems as MCD. In the following, we give the 
formal definition of MCD problem.


Let $G=(V, E)$ be an undirected graph, where $V$ and $E$ are the sets of
vertices and edges in $G$, respectively. An edge in $E$ connects two distinct
vertices in $V$. The edge between vertex $u$ and $v$ is denoted by $\{u, v\}$.
The numbers of vertices and edges in $G$ are denoted by $n$ and $m$, 
respectively (i.e., $n=|V|, m=|E|$). 
A sequence of edges $p(v_0, v_k) = \{v_0, v_1\}, \{v_1, v_2\}, \dots , \{v_{k-1}, v_k\}$ 
is called a {\em path} from $v_0$ to $v_k$ of length $k$.
A path $p(v_0, v_k)$ can be represented by a sequence of vertices
$p(v_0, v_k) = (v_0, v_1,\dots , v_k)$. 
For a path $p(v_0, v_k)$, $v_0$ and $v_k$ are called the endpoints 
of the path. 
A shortest path from $u$ to $v$ is the one
whose length is the minimum of all paths from $u$ to $v$, 
and the distance from $u$ to $v$ is the length of a shortest path from $u$ to $v$,
denoted by $d(u, v)$.

A {\em dispersal} $D$ of an undirected graph $G=(V, E)$ is a family of sets of
edges indexed by $V$, that is, $D=\{ D_v \subseteq E \mid v \in V \} $. 
We call $D_v$ a local dispersal of $v$. A local dispersal $D_v$ indicates 
the set of edges assigned to $v$. The {\em cost} of a dispersal $D$, denoted by
$c(D)$, is the sum of the cardinalities of all local dispersals in $D$
(i.e., $c(D) = \Sigma _{v \in V} |D_v|$). 
A request is a reachable unordered pair of vertices in $G$.
For a request $\{u, v\}$, $u$ and $v$ are called the endpoints
of the request.  
We say a dispersal $D$ of $G$ {\em satisfies} 
a set $R$ of requests if a path between $u$ and $v$ is included in $D_u \cup D_v$
for any request $\{u, v\} \in R$. Given two dispersals $D$ and $D'$ of $G$,
the union of two dispersals $\{D_v \cup D'_v \mid v \in V \}$ is denoted by
$D \cup D'$.

The {\em Minimum Certificate Dispersal Problem (MCD)} is defined as follows:
\begin{definition}[Minimum Certificate Dispersal Problem (MCD)]
\hspace*{1mm} \\
INPUT: An undirected graph $G=(V, E)$ and a set $R$ of requests.\\
OUTPUT: A dispersal $D$ of $G$ satisfying $R$ with minimum cost.
\end{definition}

The minimum among costs of dispersals of $G$ that satisfy $R$ is 
denoted by $c_{min}(G, R)$.
Let $D^{Opt}$ be an optimal dispersal of $G$ which satisfies $R$
(i.e., $D^{Opt}$ is one such that $c(D^{Opt}) = c_{min}(G, R)$).

Since $R$ is a set of unordered pairs of $V$, it naturally defines an undirected graph $H_R
= (V_R, E_R)$ where
$V_R = \{ u, v \mid  \{u, v\} \in R \}$ and $E_R =  R$.
The request set $R$ is called {\em tree} if $H_R$ is a tree, and is 
also called {\em star} if it is a tree with exactly one internal vertex.
The maximum degree of $H_R$ is denoted by $\Delta_R$. The problem of MCD restricting $H_R$ to tree or star with degree $\Delta_R$ is called
MCD-tree($\Delta_R$) and MCD-star($\Delta_R$). We also denote the problem 
of MCD restricting $H_R$ to tree (or star) with degree $\Delta_R=O(f(n))$ for some function $f(n)$ as MCD-tree($O(f(n))$) (or MCD-star($O(f(n))$). When we do not consider 
any constraint to the maximum degree, the argument $\Delta_R$ is omitted.


\section{MCD for Star Request Sets}
\label{sec:star}

The NP-hardness and inapproximability of directed MCD for strongly-connected 
graphs are shown in the previous work\cite{Zheng06}. In this section, we prove that 
MCD is APX-hard even if we assume that $H_R$ is a star.
The proof is by the reduction from/to the Steiner-tree problem. 
\begin{definition}[Steiner-tree Problem (STREE)]
\hspace*{1mm} \\
INPUT: An undirected connected graph $G=(V, E)$ and a set 
$T \subseteq V$ of terminals.\\
OUTPUT: A minimum-cardinality subset of edges $E' \subseteq E$ that connects 
all terminals in $T$.
\end{definition}

We often use the notation STREE($t$) and STREE($O(f(n))$), which are
the Steiner-tree problems for a terminal set with cardinality at most $t$
and $t = O(f(n))$ respectively. 

\begin{theorem} \label{theorem:star-dispersal}
There exists a polynomial time $\rho$-approximation algorithm 
for MCD-star($\Delta$) if and only if there exists a polynomial time 
$\rho$-approximation algorithm for STREE($\Delta+1$). 
\end{theorem}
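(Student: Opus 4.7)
\medskip\noindent\textbf{Proof proposal.}
The plan is to establish the theorem by exhibiting a pair of cost-preserving reductions that show, for any star request set $R$ with center $c$ and leaves $v_1,\dots,v_\Delta$, the identity $c_{min}(G,R) = \mathrm{st}(G,T)$, where $T = \{c,v_1,\dots,v_\Delta\}$ and $\mathrm{st}(G,T)$ denotes the minimum size of a Steiner tree of $G$ with terminal set $T$. Once this identity is proved, both approximability implications follow by running the given algorithm on the corresponding instance and mapping the solution back. Note that $|T|=\Delta+1$, which is why MCD-star($\Delta$) matches STREE($\Delta+1$).

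First I would handle the inequality $c_{min}(G,R) \le \mathrm{st}(G,T)$. Given any Steiner tree $E'\subseteq E$ connecting $T$, define the dispersal $D$ by $D_c := E'$ and $D_v := \emptyset$ for every $v\neq c$. For every request $\{c,v_i\}\in R$, $E'$ contains a path from $c$ to $v_i$, so $D_c\cup D_{v_i} = E'$ satisfies the request. The cost is $|E'|$, giving the inequality. This direction also immediately turns any $\rho$-approximate Steiner tree into a $\rho$-approximate dispersal for MCD-star($\Delta$).

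For the reverse inequality $\mathrm{st}(G,T) \le c_{min}(G,R)$, start from any feasible dispersal $D$ of $(G,R)$ and let $E' := D_c \cup \bigcup_{i=1}^{\Delta} D_{v_i}$. Because $D$ satisfies request $\{c,v_i\}$, the union $E'$ contains a $c$--$v_i$ path for every $i$, so the subgraph $(V,E')$ connects the terminal set $T$; consequently it contains a Steiner tree on $T$ of size at most $|E'|$. Since $|E'| \le |D_c|+\sum_i|D_{v_i}| \le c(D)$, we obtain $\mathrm{st}(G,T)\le c(D)$. Hence, applying a $\rho$-approximation algorithm for MCD-star($\Delta$) to $(G,R)$ and returning any spanning subtree of the resulting $E'$ (restricted to the $T$-connecting component) yields an edge set of size at most $c(D)\le \rho\cdot c_{min}(G,R)=\rho\cdot \mathrm{st}(G,T)$, which is a $\rho$-approximate Steiner tree for STREE($\Delta+1$).

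The construction of the two instances and the two mappings are all computable in polynomial time, so the equivalence of the existence of $\rho$-approximation algorithms is immediate. The only step that requires any care is the observation that $E':=D_c\cup\bigcup_i D_{v_i}$ is $T$-connecting; this is the principal (albeit mild) obstacle, and it rests on the fact that a path is guaranteed only inside $D_c\cup D_{v_i}$, not inside $D_c$ alone, so we must collect edges from the leaves as well before extracting a Steiner tree.
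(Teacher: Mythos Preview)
Your proposal is correct and matches the paper's own proof almost exactly: both directions use the same constructions (put the Steiner tree entirely into the center's local dispersal; take the union of the local dispersals and observe it connects the terminal set), yielding the identity $c_{\min}(G,R)=\mathrm{st}(G,T)$ and hence the approximation equivalence. Your write-up is in fact slightly more careful than the paper's in noting that one must extract a spanning subtree of the $T$-connecting component of $E'$, but the underlying argument is the same.
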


\begin{proof}
We prove the only-if part and if part can be
proved in almost the same way as the proof of the only-if part.
Given an instance $(G=(V,E), T)$ of STREE($t+1$), we construct 
an instance $(G' , R)$ of MCD-star($t$) as $G = G'$ and 
$R = \{ \{v_r, u\} \mid u \in T \setminus \{v_r\} \}$, 
where $t = \Delta_R$ and $v_r$ is an arbitrary vertex in $T$.
To prove the theorem, it suffices to show that any feasible solution 
of MCD $(G', R)$ (resp. $(G, T)$) can be transformed to a
feasible solution of $(G, T)$ (resp. $(G', R)$) with no gain of 
solution cost. Then because  $(G', R)$ and $(G, T)$ have
the same optimal cost and thus any $\rho$-approximated 
solution of $(G', R)$ induces an $\rho$-approximated solution of 
$(G, T)$.

\noindent
{\bf From  MCD-star($\Delta$) to STREE($\Delta+1$)}: Given a feasible 
solution $D = \{D_v \mid v \in V\}$ of $(G', R)$, 
we can construct a feasible solution $S = \cup_{v\in V} D_v$ of STREE. 
Since $S$ necessarily includes a path between any pair in $R$, its induced graph 
is connected and contains  all vertices in $T = V_R$. Thus,  $S$ is a feasible solution for STREE and its cost 
is at most $\sum_{v \in V} |D_i|$. 

\noindent
{\bf From STREE($\Delta+1$) to MCD-star($\Delta$)}: 
Given a feasible solution
$S$ of $(G, T)$, we obtain the solution of MCD-star by assigning 
all edges in $S (\subseteq E)$ to the internal vertex $v_r$ of $H_R$. 
Since $D_{v_r}$ connects all vertices in $V_R$, 
any request in $R$ is satisfied. Thus $D = \{D_{v_r}=S\} \cup \{
 D_v=\emptyset \mid v \in V, v \neq v_r\}$ is a feasible solution of $(G, R)$ and its cost 
is equal to $|S|$.

Then since MCD-star($\Delta$) and STREE($\Delta+1$) have  the same optimal cost, the theorem is proved. \qed
\end{proof}

Since STREE is APX-hard \cite{Bern89} and its known upper and lower bounds 
for the approximation factor are 1.28 and 1.01, respectively \cite{Robin00,Chlebik08}, we can 
obtain the following corollary.

\begin{corollary} \label{corol:MCD-star}
MCD-star is APX-hard, has a polynomial time $1.28$-approximation algorithm,
and has no polynomial time algorithm with an approximation factor 
less than $1.01$ unless $P = NP$.
\end{corollary}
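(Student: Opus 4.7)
The plan is to derive all three parts of the corollary by invoking Theorem~\ref{theorem:star-dispersal}, which establishes a tight equivalence between the polynomial-time approximability of MCD-star($\Delta$) and STREE($\Delta+1$), together with the three cited results for STREE.

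For the positive side, I would instantiate the ``if'' direction of Theorem~\ref{theorem:star-dispersal}: given an input $(G,R)$ of MCD-star, construct the corresponding STREE instance $(G,T)$ with $T=V_R$ and $v_r$ as a root, run the $1.28$-approximation of Robins and Zelikovsky \cite{Robin00} on it, and turn the returned Steiner tree $S$ into a dispersal by assigning all edges of $S$ to $v_r$ (and $\emptyset$ to all other vertices), exactly as in the proof of Theorem~\ref{theorem:star-dispersal}. Since the reduction preserves the optimal value, the resulting dispersal is a $1.28$-approximation. For the inapproximability side, I would contrapositively apply the ``only if'' direction: any hypothetical polynomial-time $(1.01-\varepsilon)$-approximation for MCD-star could be post-composed with $S=\bigcup_v D_v$ to yield a $(1.01-\varepsilon)$-approximation for STREE, contradicting the lower bound of Chleb\'ik and Chleb\'ikov\'a \cite{Chlebik08} unless $P=NP$. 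APX-hardness transfers by the same mechanism from \cite{Bern89}: an APX-hardness threshold $(1+\varepsilon_0)$ for STREE transports unchanged to MCD-star because Theorem~\ref{theorem:star-dispersal} preserves the approximation ratio exactly.

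The only thing one really needs to check is that the reduction underlying Theorem~\ref{theorem:star-dispersal} is strong enough to preserve APX-hardness, not merely polynomial-time solvability at a single ratio. This is essentially immediate from the proof of that theorem: the map $(G,T)\mapsto(G',R)$ is polynomial time, the two optima coincide ($c_{min}(G',R)=|S^{Opt}|$), and feasible solutions on either side translate to feasible solutions on the other side with cost no larger, which is exactly the definition of an L-reduction (indeed, a strict reduction) with both parameters equal to $1$. Hence there is no obstacle; the corollary is a direct consequence of Theorem~\ref{theorem:star-dispersal} combined with \cite{Bern89,Robin00,Chlebik08}.
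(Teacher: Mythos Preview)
Your proposal is correct and matches the paper's approach exactly: the paper simply states that the corollary follows from Theorem~\ref{theorem:star-dispersal} together with the cited results \cite{Bern89,Robin00,Chlebik08}, without further elaboration. Your additional remark that the reduction in the proof of Theorem~\ref{theorem:star-dispersal} is in fact a strict (L-)reduction, so that APX-hardness transfers and not merely the existence of a $\rho$-approximation at each fixed $\rho$, is a small point of rigor the paper leaves implicit, but it does not constitute a different route.
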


\section{MCD for Tree Request Sets}
\label{sec:tree}

\subsection{Tree Structure with Arbitrary Degree}

The general approximability of MCD-tree can be shown by the following
theorem:

\begin{theorem}\label{theorem:tree}
Provided any $\rho$-approximation algorithm for MCD-star,
there is a polynomial time $2\rho$-approximation algorithm for MCD-tree.
\end{theorem}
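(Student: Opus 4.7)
The plan is to decompose the tree request set into two families of vertex-disjoint stars, approximate each star separately, and take the union of the resulting dispersals.

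First, I would invoke the standard fact that the star arboricity of any tree is at most two. Concretely, I root $H_R$ arbitrarily and color each edge by the parity of the depth of its child endpoint. A direct parity check shows that each color class is a vertex-disjoint collection of stars, centered at odd-depth (respectively even-depth) vertices: a vertex can appear in a given color either as the center of one star with all its children, or as a leaf attached to its parent, but not both. This partitions $R$ as $R_1 \cup R_2$, where each $R_k = \bigcup_j S_j^{(k)}$ is a union of pairwise vertex-disjoint star request sets $S_j^{(k)}$.

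The algorithm then runs the given $\rho$-approximation for MCD-star on every instance $(G, S_j^{(k)})$, obtaining a dispersal $D^{(k,j)}$, and outputs $D = \bigcup_{k,j} D^{(k,j)}$. Since each $D^{(k,j)}$ satisfies $S_j^{(k)}$ and $R_1 \cup R_2 = R$, the output is feasible and $c(D) \le \sum_{k,j} c(D^{(k,j)})$. For the cost bound, I would take an optimal dispersal $D^{\mathrm{Opt}}$ for $(G,R)$: for each star $S_j^{(k)}$, restricting $D^{\mathrm{Opt}}$ to the endpoint set $V(S_j^{(k)})$ still satisfies $S_j^{(k)}$, so
$$c_{min}(G, S_j^{(k)}) \le \sum_{v \in V(S_j^{(k)})} |D^{\mathrm{Opt}}_v|.$$
The vertex-disjointness of the stars inside one color class lets these local sums combine without overlap into $\sum_j c_{min}(G, S_j^{(k)}) \le c(D^{\mathrm{Opt}}) = c_{min}(G,R)$. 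Applying the $\rho$-approximation bound star by star and summing over the two colors gives $c(D) \le 2\rho \cdot c_{min}(G,R)$, which is the desired guarantee.

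The only subtle point is the vertex-disjointness of the stars within a color class, since that is precisely what prevents a single local dispersal $D^{\mathrm{Opt}}_v$ from being charged to more than one star of the same color and thereby inflating the bound. Fortunately this disjointness is built into the depth-parity coloring, so the main obstacle really reduces to spelling out the coloring carefully and checking that each resulting instance $(G, S_j^{(k)})$ is a legitimate MCD-star input, after which the cost telescoping above is routine.
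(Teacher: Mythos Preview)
Your proof is correct and follows essentially the same approach as the paper: the depth-parity decomposition of the rooted tree $H_R$ into two vertex-disjoint star forests, independent approximation on each star, and the union as output. The paper phrases the cost bound at the level of the two subinstances $(G,R^0)$ and $(G,R^1)$ via the monotonicity $c_{\min}(G,R^j)\le c_{\min}(G,R)$, whereas you work star by star and invoke vertex-disjointness explicitly to avoid double-charging any $D^{\mathrm{Opt}}_v$; these are the same argument unwound to different granularities.
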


We first introduce the construction of the algorithm: Given an instance 
($G=(V,E), R$) of MCD-tree, we regard $H_R$ as a rooted tree by 
picking up an arbitrary vertex as its root. Letting $\mathit{depth}(v)$ 
($v \in V_R$) be the distance from the root to $v$ on $H_R$, we 
partition the request set $R$ into two disjoint 
subsets $R^{i}$ ($i \in \{0, 1\}$) as $R^{i} = \{\{u, v\}\mid  \mathit{depth}(u) 
< \mathit{depth}(v) \ \mathrm{and} \  \mathit{depth}(u) \bmod 2 = i\}$. Note that 
both $R^{1}$ and $R^{0}$ respectively form two forests where each 
connected component is a star. Thus, using any algorithm for MCD-star 
(denoted by $\mathcal{A}$), we can obtain two solutions of 
$(G, R^{1})$ and $(G, R^{0})$ by independently solving the 
problems associated with each connected component. Letting $D^j$ 
be the solution of instance $(G, R^{j})$, the
union $D^1 \cup D^0$ is the final solution of our algorithm.

It is obvious that the returned solution is feasible. Since 
both of $c(D^1)$ and $c(D^0)$ are the lower bound of
the optimal cost for $(G, R)$, the algorithm achieves 
approximation ratio $2\rho$. For lack of the space,
we give the proof details in the appendix. The above theorem and 
Corollary \ref{corol:MCD-star} leads the following corollary: 
\begin{corollary}
MCD-tree has a polynomial time $2.56$-approximation algorithm.
\end{corollary}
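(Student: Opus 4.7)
The plan is to combine the two earlier results directly: Corollary~\ref{corol:MCD-star} provides a polynomial time $1.28$-approximation algorithm for MCD-star (obtained via the reduction to STREE in Theorem~\ref{theorem:star-dispersal} together with the known $1.28$-approximation for the Steiner tree problem), and Theorem~\ref{theorem:tree} promotes any $\rho$-approximation for MCD-star into a $2\rho$-approximation for MCD-tree. Instantiating $\rho = 1.28$ yields the claimed factor $2 \cdot 1.28 = 2.56$.

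Concretely, the algorithm I would describe is: given an instance $(G, R)$ of MCD-tree, root $H_R$ at an arbitrary vertex, split $R$ into the two layer-parity classes $R^0$ and $R^1$ exactly as in the proof of Theorem~\ref{theorem:tree}, and apply the $1.28$-approximation from Corollary~\ref{corol:MCD-star} independently to each star component of the resulting forests. The output is the union $D^0 \cup D^1$ of the two resulting dispersals, which is feasible for $(G, R)$ because every request lies in exactly one $R^i$.

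The ratio analysis is immediate: since $R^0, R^1 \subseteq R$, the optimal costs $c_{min}(G, R^0)$ and $c_{min}(G, R^1)$ are each at most $c_{min}(G, R)$, and by the guarantee of the $1.28$-approximation applied componentwise, $c(D^i) \le 1.28\cdot c_{min}(G, R^i) \le 1.28\cdot c_{min}(G, R)$ for $i \in \{0,1\}$. Summing gives $c(D^0 \cup D^1) \le c(D^0) + c(D^1) \le 2.56\cdot c_{min}(G, R)$. Polynomial running time is inherited from the STREE approximation and from the fact that the partition and the union are computed in linear time.

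There is essentially no obstacle here, since both ingredients are already stated and proved; the only thing worth double-checking is that the MCD-star approximation is applied to each connected star component of $R^0$ and $R^1$ separately (each component is a legitimate MCD-star instance), so that the $1.28$ factor carries over to the aggregate cost on each $R^i$.
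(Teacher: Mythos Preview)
Your proposal is correct and follows exactly the paper's own derivation: the corollary is obtained simply by plugging the $1.28$-approximation for MCD-star from Corollary~\ref{corol:MCD-star} into the $2\rho$-bound of Theorem~\ref{theorem:tree}. Your explicit description of the algorithm and the ratio analysis matches the construction and argument given in (the proof of) Theorem~\ref{theorem:tree}.
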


\subsection{Tree Structures with $O(\log n)$ Degree}

\sloppy{
In the proof of Theorem \ref{theorem:tree}, we have shown that 
the approximated solution for instance $(G, R)$ of MCD-tree can be 
constructed by solving several MCD-star instances. Thus, 
if $\Delta_R = O(\log n)$, each decomposed star has $O(\log n)$ vertices
(that is, an instance of MCD-star($O(\log n)$)). By Theorem 
\ref{theorem:star-dispersal}, MCD-star($O(\log n)$) and STREE($O(\log n)$)
have the same complexity and STREE($O(\log n)$) is optimally solved in
polynomial time \cite{Dreyfus72}. Therefore, 
Theorem\ref{theorem:tree} leads the following corollary. 
 }
\begin{corollary}
There is an optimal algorithm to solve MCD-star($O(\log n)$) in polynomial time and
there is  an approximation factor 2 polynomial time algorithm 
for MCD-tree($O(\log n)$).
\end{corollary}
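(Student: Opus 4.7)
The plan is to derive both statements by combining Theorem~\ref{theorem:star-dispersal} with the classical Dreyfus--Wagner dynamic programming algorithm \cite{Dreyfus72} for the Steiner-tree problem, and then feeding the resulting exact MCD-star solver into the reduction of Theorem~\ref{theorem:tree}.

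First I would handle MCD-star($O(\log n)$). By Theorem~\ref{theorem:star-dispersal}, a $\rho$-approximation for MCD-star($\Delta$) is equivalent to a $\rho$-approximation for STREE($\Delta+1$). The Dreyfus--Wagner algorithm solves STREE with $t$ terminals optimally in time $O(3^t n + 2^t n^2 + n^3)$. When $t = \Delta + 1 = O(\log n)$, both $3^t$ and $2^t$ are polynomial in $n$, so the whole procedure runs in polynomial time. Composing this with the reduction of Theorem~\ref{theorem:star-dispersal} yields an exact polynomial-time algorithm for MCD-star($O(\log n)$), that is, the approximation ratio $\rho = 1$ is achievable.

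Second, for MCD-tree($O(\log n)$) I would apply the decomposition already used in the proof of Theorem~\ref{theorem:tree}: root $H_R$ arbitrarily and split $R$ into $R^0$ and $R^1$ according to the parity of the depth of the shallower endpoint of each request. Each connected component of $R^0$ or $R^1$ is a star centered at some vertex $u$ of $H_R$, whose leaves are children of $u$ in the rooted tree; hence its degree is at most $\deg_{H_R}(u) \le \Delta_R = O(\log n)$. Every such component is therefore an instance of MCD-star($O(\log n)$) and can be solved optimally by the algorithm of the first step. Taking the union of the two resulting star-forest solutions, as in Theorem~\ref{theorem:tree} applied with $\rho = 1$, produces a $2$-approximation for the original MCD-tree($O(\log n)$) instance.

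The main obstacle is essentially bookkeeping rather than a new combinatorial argument: one must verify that the degree bound on $H_R$ is inherited by each decomposed star (which follows from the parent--child observation above), and that the Dreyfus--Wagner running time remains polynomial when $t = O(\log n)$, which reduces to the identity $a^{O(\log n)} = n^{O(1)}$ for any constant $a$. Once these two sanity checks are in place, the corollary follows by a direct concatenation of Theorems~\ref{theorem:star-dispersal} and~\ref{theorem:tree}.
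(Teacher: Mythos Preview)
Your proposal is correct and follows essentially the same route as the paper: reduce MCD-star($O(\log n)$) to STREE($O(\log n)$) via Theorem~\ref{theorem:star-dispersal}, solve the latter exactly with the Dreyfus--Wagner algorithm, and plug the resulting $\rho=1$ star solver into the decomposition of Theorem~\ref{theorem:tree}. You have simply been more explicit than the paper about the Dreyfus--Wagner running time and about why each decomposed star inherits the $O(\log n)$ degree bound.
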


\section{Tree Structures with Constant Degree}
\label{sec:consttree}
\vspace*{-.3cm}

In this section, we provide an algorithm that returns the optimal dispersal
for any instance of MCD-tree($O(1)$). 
Throughout this section, we regard $H_R$ as a rooted tree by picking 
up an arbitrary vertex $r$ in $V_R$ as its root. Given a vertex $u \in V_R$,
let $\mathit{par}(u)$ be the parent of $u$, and let 
$\mathit{Child}(u)$ be the set of $u$'s children. 

A request $\{u, v\}$ is {\em well-satisfied} by a feasible $D$ 
if there exists a vertex $\alpha_{u, v}$ such that $D_u$ contains a path from 
$u$ to $\alpha_{u,v}$ and $D_v$ contains a path from $\alpha_{u,v}$ to $v$.
Then, vertex $\alpha_{u, v}$ is called the {\em connecting point} of request 
$\{u, v\}$ in $D$.

We begin with the following fundamental
property:

\begin{lemma} \label{lemma:connectivity}
For any instance $(G, R)$ of MCD-tree,
there is an optimal solution that well-satisfies all requests in $R$.
\end{lemma}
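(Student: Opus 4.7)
The plan is an exchange argument. Given any optimal dispersal $D^{Opt}$ I will modify it, at no extra cost, into an optimal dispersal in which every request is well-satisfied. Suppose some request $\{u,v\} \in R$ is not well-satisfied by $D^{Opt}$. Feasibility provides a simple path $P = (u = w_0, w_1, \ldots, w_k = v)$ inside $D^{Opt}_u \cup D^{Opt}_v$; write $E(P) = \{e_1, \ldots, e_k\}$ with $e_i = \{w_{i-1}, w_i\}$.

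For each split index $j \in \{0, \ldots, k\}$ I consider the dispersal
\[
D^{(j)}_u = (D^{Opt}_u \setminus E(P)) \cup \{e_1, \ldots, e_j\}, \quad D^{(j)}_v = (D^{Opt}_v \setminus E(P)) \cup \{e_{j+1}, \ldots, e_k\},
\]
and $D^{(j)}_w = D^{Opt}_w$ for $w \notin \{u,v\}$. By construction $\{u,v\}$ is well-satisfied in $D^{(j)}$ with connecting point $w_j$. A direct count yields
\[
c(D^{(j)}) - c(D^{Opt}) = |E(P)| - |D^{Opt}_u \cap E(P)| - |D^{Opt}_v \cap E(P)| \le 0,
\]
because $E(P) \subseteq D^{Opt}_u \cup D^{Opt}_v$. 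Hence $D^{(j)}$, when feasible, is an optimal dispersal that well-satisfies $\{u,v\}$.

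The hardest part is choosing $j$ so that $D^{(j)}$ remains feasible. Since only $D_u$ and $D_v$ change, only requests incident to $u$ or $v$ are at risk, and only the edges $D^{Opt}_u \cap \{e_{j+1}, \ldots, e_k\}$ on the $u$-side and $D^{Opt}_v \cap \{e_1, \ldots, e_j\}$ on the $v$-side have been lost. For a risk-bearing request $\{u,x\}$ I would re-route its $u$-to-$x$ path through $D^{Opt}_x$ together with the newly added prefix $\{e_1, \ldots, e_j\}$ and the unchanged part $D^{Opt}_u \setminus E(P)$. Exploiting the tree structure of $H_R$, I would root $H_R$ and choose $j$ consistently with a bottom-up processing of requests, so that re-routings do not cascade through a cycle of request dependencies (no such cycle exists in a tree). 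Should every $j$ fail, the obstruction would expose a redundant edge in $D^{Opt}$, contradicting its optimality and supplying an even simpler cost-preserving fix. Iterating this single-request improvement with a monovariant such as the number of non-well-satisfied requests yields, after finitely many steps, an optimal dispersal that well-satisfies all of $R$.
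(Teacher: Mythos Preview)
Your outline shares the paper's high-level plan --- an exchange argument processed bottom-up on a rooted $H_R$ --- and the cost inequality for $D^{(j)}$ is correct. The gap is exactly where you flag ``the hardest part'': you never establish that some split index $j$ keeps $D^{(j)}$ feasible, nor that your monovariant decreases.

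Concretely, ``I would re-route its $u$-to-$x$ path through $D^{Opt}_x$ together with the newly added prefix and the unchanged part $D^{Opt}_u\setminus E(P)$'' is a hope, not an argument: nothing shows this union still contains a $u$--$x$ path once $D^{Opt}_u\cap\{e_{j+1},\dots,e_k\}$ is gone. The fallback ``should every $j$ fail, the obstruction would expose a redundant edge'' is unsupported; simultaneous failure of all $j$ does not obviously contradict optimality. Even when some $D^{(j)}$ is feasible, deleting those edges from $D_u$ may destroy the well-satisfaction of a previously well-satisfied child request $\{u,u'\}$, so the count of non-well-satisfied requests need not drop. Worse, your construction also \emph{removes} edges from $D_v$ (namely $D^{Opt}_v\cap\{e_1,\dots,e_j\}$), endangering requests incident to $v$ on the other side of the tree --- requests over which a bottom-up scheme gives you no inductive control.

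The paper closes this gap with a much smaller move. With $u$ chosen deepest, every child request $\{u,u'\}$ is already well-satisfied, so $D_u$ contains a path $P_{u'}$ from $u$ to each connecting point $\alpha_{u,u'}$. Because $\{u,\mathit{par}(u)\}$ is \emph{not} well-satisfied, one can pick a single edge $e_j\in D_u$ on $P$ such that all later edges of $P$ already lie in $D_v$; this $e_j$ lies on none of the $P_{u'}$ (otherwise $D_u$ would reach an endpoint of $e_j$ and the parent request would be well-satisfied). Transferring just this one edge from $D_u$ to $D_v$ keeps the cost, preserves feasibility and well-satisfaction of every child request, and removes nothing from $D_v$. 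Iterating the one-edge transfer eventually well-satisfies $\{u,\mathit{par}(u)\}$ without disturbing anything already fixed. Your all-at-once reshuffling of $E(P)$ forfeits exactly this control.
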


By the above lemma, we can reduce the search space to one where
each feasible solution well-satisfies all requests. In the following 
argument, we assume that every request has a connecting point 
in the optimal dispersal. The principle of our algorithm is to 
determine the connecting points 
recursively from the leaf side of $H_R$ via dynamic programming. 
Let $T_R(u) = (V_R(u), E_R(u))$ be the subtree of $H_R$ rooted 
by $u$, $D^{\ast}(u, \alpha)$ be a dispersal for 
instance $(G, E_R(u))$ with the smallest cost such that 
$D_u$ contains a path to from $u$ to $\alpha$. Note that 
$D^{\ast}(r, r)$ is an optimal solution of $(G, R)$.
We define $\gamma(u) = |\mathit{Child}(u)|$ for short. The key recurrence 
of the dynamic programming can be stated by the following lemma:

\begin{lemma} \label{lemma:reccursion}
Let $u$ and $\alpha$ be vertices in $V$ and let 
$A=(\alpha_1,...,\alpha_{\gamma(u)}) \in V^{\gamma(u)}$. Then 
the following equality holds:
\[
c(D^{\ast}(u, \alpha)) = 
\min_{A \in V^{|\gamma(u)|}} \{ 
c(D^{\mathit{Opt}}(G, E_A \cup \{\{u, \alpha\}\})) + 
\sum_{u_k \in \mathit{Child}(u)} c(D^{\ast}(u_k, \alpha_k))\}
\]
where $E_A = \{\{u, \alpha_1\}, \{u, \alpha_2\}, \cdots, 
\{u, \alpha_{\gamma(u)}\}\}$.
\end{lemma}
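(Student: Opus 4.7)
The plan is to prove the equality by establishing both inequalities.

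For the $(\leq)$ direction, fix an arbitrary tuple $A=(\alpha_1,\ldots,\alpha_{\gamma(u)}) \in V^{\gamma(u)}$ and exhibit a feasible dispersal of $(G, E_R(u))$ whose $u$-local set contains a path from $u$ to $\alpha$ and whose cost matches the corresponding summand on the right-hand side. Take $D^{(k)} := D^{\ast}(u_k, \alpha_k)$ for each child $u_k$ of $u$, and let $D^{up}$ be an optimal dispersal of $(G, E_A \cup \{\{u,\alpha\}\})$ that concentrates all of its edges at $u$; such a solution exists by the MCD-star-to-STREE construction inside the proof of Theorem~\ref{theorem:star-dispersal}. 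Define $D := D^{up} \cup \bigcup_{k} D^{(k)}$. Then $D_u$ contains paths from $u$ to every $\alpha_k$ and to $\alpha$, while each $D_{u_k} \supseteq D^{(k)}_{u_k}$ contains a $u_k$-to-$\alpha_k$ path, so every request $\{u,u_k\}$ is well-satisfied and every request in $E_R(u_k)$ is already satisfied by $D^{(k)}$. Sub-additivity of union over each local dispersal gives $c(D) \leq c(D^{up}) + \sum_{k} c(D^{(k)})$, and since $D$ is a feasible dispersal with the required $u$-to-$\alpha$ path, $c(D^{\ast}(u,\alpha)) \leq c(D)$, yielding the desired bound.

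For the $(\geq)$ direction, start with $D^{\ast} := D^{\ast}(u, \alpha)$ and, by Lemma~\ref{lemma:connectivity}, assume it well-satisfies every request of $E_R(u)$. For each child $u_k$, let $\alpha_k$ be the connecting point of the request $\{u, u_k\}$; this choice of $A$ realizes the minimum on the right. By well-satisfaction, $D^{\ast}_u$ contains a $u$-to-$\alpha_k$ path for every $k$, and by the defining constraint of $D^{\ast}(u,\alpha)$ it also contains a $u$-to-$\alpha$ path. Hence $D^{\ast}_u$ alone connects the terminal set $\{u,\alpha,\alpha_1,\ldots,\alpha_{\gamma(u)}\}$, so $|D^{\ast}_u|$ is at least the minimum Steiner-tree size over those terminals, which by Theorem~\ref{theorem:star-dispersal} equals $c(D^{\mathit{Opt}}(G, E_A \cup \{\{u,\alpha\}\}))$. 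For each $k$, restricting $D^{\ast}$ to vertices in $V_R(u_k)$ (and emptying all other local dispersals) yields a feasible dispersal for $(G, E_R(u_k))$ whose $u_k$-local set still contains the $u_k$-to-$\alpha_k$ path; its cost $\sum_{v \in V_R(u_k)}|D^{\ast}_v|$ therefore upper-bounds $c(D^{\ast}(u_k, \alpha_k))$. Using the disjoint decomposition $V_R(u) = \{u\} \cup \bigcup_{k} V_R(u_k)$, summing yields $c(D^{\ast}) \geq |D^{\ast}_u| + \sum_{k} \sum_{v \in V_R(u_k)} |D^{\ast}_v|$, which dominates the RHS summand for this particular $A$ and hence its minimum over all $A$.

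The main obstacle is to avoid double-counting edges when splitting $D^{\ast}$ into an ``upstream'' piece at $u$ and one piece per child subtree, since a priori an edge stored at some $v \in V_R(u_k)$ could be imagined to contribute to paths for requests outside $V_R(u_k)$. The disjointness of the vertex sets $V_R(u_k)$, inherited from $H_R$ being a tree, is precisely what makes the per-vertex cost accounting go through without overlap. The remaining non-trivial ingredient is the Steiner-tree lower bound on $|D^{\ast}_u|$, for which I rely on the MCD-star/STREE equivalence established in Theorem~\ref{theorem:star-dispersal}.
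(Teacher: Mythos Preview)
Your proof is correct and follows essentially the same approach as the paper's: for the $(\geq)$ direction you, like the paper, pick the connecting points $\alpha_k$ of the requests $\{u,u_k\}$ in $D^{\ast}(u,\alpha)$, bound $|D^{\ast}_u|$ below by the Steiner/star optimum on terminals $\{u,\alpha,\alpha_1,\ldots,\alpha_{\gamma(u)}\}$, and bound the remaining per-subtree costs by the $c(D^{\ast}(u_k,\alpha_k))$ via optimality. Your write-up is in fact more complete than the paper's, which argues only the $(\geq)$ direction explicitly and leaves the $(\leq)$ construction (your $D^{up}\cup\bigcup_k D^{(k)}$ with sub-additivity of the union) to the surrounding algorithmic discussion; you also make explicit the disjoint vertex decomposition $V_R(u)=\{u\}\cup\bigcup_k V_R(u_k)$ that justifies the cost split, which the paper takes for granted.
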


This recurrence naturally induces a polynomial time algorithm 
for MCD-tree($O(1)$). The pseudo-code of the algorithm is shown in
Algorithm \ref{algo:consttree}. The algorithm maintains a table $D^{\ast}$,
where each entry $D^{\ast}[u][\alpha]$ stores the solution 
$D^{\ast}(u, \alpha)$. The core of the algorithm is to fill
the table following the recurrence of Lemma \ref{lemma:reccursion}:
Assume an arbitrary ordering $\sigma = u_1, u_2, \cdots u_{|V_R|}$ 
of vertices in $V_R$ where any vertex appears after all of its descendants 
have appeared. To compute the solution to be stored 
in $D^{\ast}[u_i][\alpha]$, the algorithm considers all possible 
choices of connecting points to $u_i$'s children. Let 
$q_1, q_2, \cdots q_{\gamma(u_i)}$ be the children of $u_i$. 
Fixing a choice $A = (\alpha_{u_i, q_1}, \alpha_{u_i, q_2}, 
\cdots \alpha_{u_i, q_\gamma(u_i)})$ 
of connecting points (in the pseudo-code, $\alpha_k$ corresponds to 
$\alpha_{u_i, q_k}$), the algorithm determines the local dispersal to 
$u$ by computing the optimal solution for $(G, E_A \cup \{\{u_i, \alpha\}\})$. 
Note that this can be computed in polynomial time because the request
set forms a constant-degree star. By Theorem \ref{theorem:star-dispersal},
it is equivalent to STREE($O(1)$). Letting $D'$ be the computed solution for 
$(G, E_A \cup \{\{u_i, \alpha\}\})$. we obtain 
$D = D' \cup D^{\ast}[q_1][\alpha_{u_i, q_1}] \cup 
D^{\ast}[q_2][\alpha_{u_i, q_2}] \cup 
\cdots \cup D^{\ast}[q_{\gamma(u)}][\alpha_{u_i, q_{\gamma(u)}}]$. Importantly,
we can assume that only $D'_u$ is nonempty in $D'$ (recall 
the construction of MCD-star solutions from STREE solutions), which
implies that $D_{u_i}$ has a path to any connecting point $\alpha_{u_i, q_j}$ 
in $A$. Since it $D_{q_j}$ has a path from $q_j$ to $\alpha_{u_i, q_j}$ from 
the definition of $^{\ast}[q_i][\alpha_{u_i, q_j}]$, $D_{u_i} \cup D_{q_j}$
necessarily has the path between $u_i$ and $q_j$, That is the feasibility of 
$D$ is guaranteed. If $D$ is better than the solution already computed 
(for other choice of $A$), $D^{\ast}[u_i][\alpha]$ is updated by $D$. 
After the computation for all possible choices of $A$, 
$D^{\ast}[u_i][\alpha]$ stores the optimal solution. Finally, after filling 
all entries of the table, the algorithm returns 
$D^{\ast}[u_{|V_R|}][u_{|V_R|}]$, which is the 
optimal solution for instance $(G, R)$. 

Lemma \ref{lemma:reccursion} obviously derives the correctness of 
Algorithm \ref{algo:consttree}. Since we assume that the maximum 
degree of tree $H_R$ is a constant, the number of tuples of $A$
is also a constant. Thus the number of possible choices about $A$ is bounded 
by a polynomial of $n$. It follows that the running time of Algorithm 
\ref{algo:consttree} is bounded by a polynomial of $n$. We can 
have the following theorem:

\begin{theorem}
There is a polynomial time algorithm solving MCD-tree($O(1)$).
\end{theorem}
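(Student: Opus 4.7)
The plan is to formalize the dynamic-programming approach already sketched between Lemma~\ref{lemma:reccursion} and the theorem statement, treating Lemmas~\ref{lemma:connectivity} and~\ref{lemma:reccursion} as black boxes. By Lemma~\ref{lemma:connectivity} we may restrict attention to dispersals that well-satisfy every request, so that every request receives a single connecting point. Rooting $H_R$ arbitrarily and fixing a post-order traversal $u_1,u_2,\ldots,u_{|V_R|}$ of $V_R$, I would maintain a table with entries $D^{\ast}[u][\alpha]$ for each pair $(u,\alpha)\in V_R\times V$, each entry storing an optimal dispersal for the subinstance $(G,E_R(u))$ among those whose local dispersal at $u$ contains a $u$-to-$\alpha$ path. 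The final output is $D^{\ast}[r][r]$ for the root $r$.

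To fill $D^{\ast}[u_i][\alpha]$, I would enumerate all tuples $A=(\alpha_1,\ldots,\alpha_{\gamma(u_i)})\in V^{\gamma(u_i)}$ of connecting points for the children $q_1,\ldots,q_{\gamma(u_i)}$ of $u_i$. For each choice, I solve the auxiliary star-shaped MCD instance $(G,E_A\cup\{\{u_i,\alpha\}\})$ to obtain a dispersal $D'$ and combine it with the previously computed $D^{\ast}[q_k][\alpha_k]$ as
\[
D \;=\; D' \;\cup\; \bigcup_{k=1}^{\gamma(u_i)} D^{\ast}[q_k][\alpha_k].
\]
Because the star-to-STREE reduction of Theorem~\ref{theorem:star-dispersal} concentrates the whole dispersal at the center, $D'_{u_i}$ itself contains a path from $u_i$ to every $\alpha_k$; concatenating with the $q_k$-to-$\alpha_k$ path already living in $D^{\ast}[q_k][\alpha_k]$ certifies the request $\{u_i,q_k\}$, so $D$ is feasible on $E_R(u_i)$. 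After scanning all tuples $A$, the minimum-cost candidate is stored in $D^{\ast}[u_i][\alpha]$; correctness of this update is exactly Lemma~\ref{lemma:reccursion}, and a straightforward induction along the post-order yields global correctness.

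For the running time, the table has $O(|V_R|\cdot|V|)$ entries and each entry is updated $|V|^{\gamma(u_i)}$ times; since $\Delta_R=O(1)$, this is $|V|^{O(1)}$. Each update calls the auxiliary MCD-star oracle on an instance with $\gamma(u_i)+1=O(1)$ leaves, which by Theorem~\ref{theorem:star-dispersal} reduces to STREE on $O(1)$ terminals and is therefore solvable in polynomial time by the Dreyfus--Wagner algorithm~\cite{Dreyfus72}. The overall bound is polynomial in $n$. The main obstacle I anticipate is not the counting, but ensuring that the composition step genuinely preserves feasibility: the argument depends critically on being able to choose a star-solution $D'$ that is supported \emph{only} at $u_i$, because otherwise the stitched $u_i$-to-$q_k$ path need not lie inside $D_{u_i}\cup D_{q_k}$. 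This is where the explicit reconstruction from the only-if part of Theorem~\ref{theorem:star-dispersal} is indispensable, and the proof should invoke it explicitly rather than merely citing the equivalence.
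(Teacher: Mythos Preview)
Your proposal is correct and follows essentially the same approach as the paper: dynamic programming along a post-order traversal of the rooted $H_R$, with table $D^{\ast}[u][\alpha]$, enumeration of connecting-point tuples $A\in V^{\gamma(u)}$, solution of the auxiliary star instance via Theorem~\ref{theorem:star-dispersal} and Dreyfus--Wagner, and the same feasibility argument hinging on the fact that the STREE-to-MCD-star construction concentrates the entire local dispersal at the center $u_i$. Your remark about this concentration being the critical point is exactly the observation the paper flags as ``Importantly, we can assume that only $D'_{u_i}$ is nonempty in $D'$''.
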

\vspace*{-.3cm}

\Newcodeline
\begin{algorithm}[h]
\caption{Polynomial Time Algorithm for MCD-tree($O(1)$)}
\label{algo:consttree}
{\small 
\begin{tabbing}
111 \= 11 \= 11 \= 11 \= 11 \= 11 \= 11 \= \kill
\Cl \> $D^{\ast}[V_R][V]$ : the array storing the computed solutions \crm
\Cl \> \> (All entries are initialized by a dummy solution with cost $\infty$)
\crm
\Cl \> $\sigma = u_1, u_2, \cdots u_{|V_R|}$ : an ordering of $V_R$ \crm 
\Cl \> \> containing parent-child relationship on $H_R$ 
(children come earlier).\crm
\> \crm
\Cl \> {\bf for each} $u_i \in V_R$ in order of $\sigma$ {\bf do} \crm 
\Cl \> \> Let $Q = (q_1, q_2, \cdots q_{\gamma(u_i)})$ be an arbitrary
ordering of $\mathit{Child}(u_i)$ \crm
\Cl \> \> {\bf for each} $(A, \alpha) = 
((\alpha_1, \alpha_2, \cdots, \alpha_{\gamma(u_i)}), \alpha) 
\in V^{\gamma(u_i)} \times V$ {\bf do} \crm
\Cl \> \> \> $D' \ot$ the optimal solution of $(G, E_A \cup \{\{u, \alpha\}\})$
s.t. only $D'_{u_i}$ is nonempty. \crm
\Cl \> \> \> \Scomment{$E_A = \{\{u_i, \alpha_1\}, \{u_i, \alpha_2\}, \cdots
\{u_i, \alpha_{\gamma(u_i)}\}\}$} \crm
\Cl \> \> \> $D \ot D' \cup \left( \bigcup_{j \in [1, \gamma(u_i)]} 
D^{\ast}[q_j][\alpha_j] \right)$ \crm
\Cl \> \> \> {\bf if} $c(D^{\ast}[u_i][\alpha]) > c(D)$ {\bf then} 
$D^{\ast}[u_i][\alpha] \ot D$ \crm
\Cl \> \> {\bf endfor} \crm
\Cl \> {\bf endfor} \crm 
\Cl \> return $D^{\ast}[u_{|V_R|}][u_{|V_R|}]$
\end{tabbing}
}
\end{algorithm}

\vspace*{-.5cm}
\section{MCD for Tree Graphs}
\label{sec:graph}
\vspace*{-.3cm}

While the previous sections focus on the structure of $H_R$, in this section, 
we look at the structure of graph $G$: We show that MCD is solvable in
polynomial time if $G$ is a tree.
In the algorithm, we compute for each edge $e\in E$ which $D_u$ should
contain $e$; for each $e\in E$,  we decide $\{u\in V \mid e\in D_u \}$. 
For this decision about $e\in E$, we utilize a bipartite graph 
that represents whether a request $\{u,v\}$ should use $e$ in its path. 

Let $T=(V, E)$ be a tree and $R$ be a request set. Now we consider to
decide $\{u\in V \mid e\in D_u \}$ for an edge $e=\{u,v\}\in E$. 
By deleting $e=\{u,v\}$ from $T$, we obtain two subtrees $T_u=(V_u,
E_u)$ and $T_v=(V_v, E_v)$ of $T$, where $T_u$ and $T_v$ contain $u$ and
$v$, respectively. Note that $V_u\cap V_v = \emptyset$ and $V=V_u \cup
V_v$.  From these two subtrees $T_u$ and $T_v$, we construct a bipartite
graph $B_{uv} = (V_u \cup V_v, E_{uv})$, where $E_{uv} = \{ \{ a, b \}
\in R \mid a \in V_u, b\in V_v \}$. 
It should be noted that $\{e\}$ is an $a$-$b$ cut for every $\{a,b\}\in
E_{uv}$, since $T$ is a tree. Thus, this bipartite graph represents  
that if an edge $\{w_i,w_j\}\in E_{uv}$, at least one of $w_i$ or $w_j$
should have $e=\{u,v\}$ in its local dispersal, i.e., $e \in D_u \cup
D_v$, otherwise $D$ does not satisfy request $\{w_i,w_j\}$ due to cut $\{e\}$. 

This condition is interpreted as a vertex cover of $B_{uv}$. 
A {\em vertex cover} $C$ of a graph is a set of vertices
such that each edge in its edge set is incident to at least one vertex in
$C$. 
Namely, a necessary condition of $D$ satisfying $R$ is that 
for each $e=\{u,v\}$, $C_{uv}=\{w \in V \mid e \in D_w \}$ is a vertex
cover of $B_{uv}$. We call this {\em vertex cover condition}. 
It can be shown that the vertex cover condition is also
sufficient for $D$ to satisfy $R$. Suppose that a dispersal $D$
satisfies the vertex cover condition. For a request $\{a_0,a_k\}$
and its unique path $p(a_0,a_k)=(a_0,a_1,\ldots,a_k)$ on $T$, by the
definition of $B_{uv}$, every $B_{a_{i}a_{i+1}}$  contains
edge $\{a_{0},a_{k}\}$. By the vertex cover condition, $\{a_{i},a_{i+1}\}\in
D_{a_0} \cup D_{a_k}$ holds for $i=0,\ldots,k-1$, which implies 
$D_{a_0}\cup D_{a_k}$ contains path $p(a_0,a_k)$; $D$ satisfies request
$\{a_0,a_k\}$. 

By these arguments, the vertex cover condition is equivalent to the
feasibility of $D$. Also it can be seen that choices of vertex cover of
$B_{uv}$ and another $B_{u'v'}$ are independent to each other in terms
of the feasibility of $D$. These imply that the union of the minimum
size of vertex cover for $B_{uv}$'s is an optimal solution of MCD for
tree $G$.  

From these, we obtain the following algorithm: For every edge $\{ u, v
\} $ in $T$, we first compute a minimum vertex cover $C_{uv}$ of 
bipartite graph $B_{uv}$. Then, let $D_w = \{ \{ u, v \} \in E \mid w \in
C_{uv} \}$ and output. Since VERTEX-COVER problem for bipartite graphs
can be solved via the {\em maximum matching problem}~\cite{Konig31},  
whose time complexity is $O(\sqrt{n}m)$ time, where $n$ and $m$ are the
numbers of vertices and edges, respectively~\cite{Hopcroft73}. Thus, 
MCD for undirected tree $G$ can be solved in $O(n^{1.5}|R|)$ time. 

\begin{theorem}
For an undirected tree graph $G$ and any request $R$, 
MCD is solvable in $O(n^{1.5}|R|)$ time. 
\end{theorem}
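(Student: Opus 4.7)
The plan is to exploit the tree structure of $G = T$: since every pair of vertices is connected by a unique path in $T$, the choice of which local dispersals contain each edge $e$ can be made independently across the $n-1$ tree edges. For each edge $e$ I will reduce the local decision to a minimum vertex cover problem on a bipartite graph, then invoke bipartite matching to solve it efficiently.

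First, I would fix an edge $e = \{u,v\} \in E$ and delete it to obtain two subtrees with vertex sets $V_u$ and $V_v$; then build the bipartite graph $B_{uv}$ on bipartition $(V_u, V_v)$ whose edges are exactly those requests $\{a,b\} \in R$ with $a \in V_u$ and $b \in V_v$. For necessity, any such cross request has $e$ as the unique $a$-$b$ cut in $T$, so satisfying it forces $e \in D_a \cup D_b$, i.e., the set $C_e := \{w : e \in D_w\}$ must cover the edge $\{a,b\}$ in $B_{uv}$. Hence feasibility of $D$ implies that $C_e$ is a vertex cover of $B_{uv}$ for every $e$.

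Next I would establish sufficiency: if $C_e$ is a vertex cover of $B_{uv}$ for every edge $e$, then $D$ satisfies $R$. For any request $\{a_0,a_k\} \in R$ with unique $T$-path $(a_0, a_1, \dots, a_k)$, each edge $\{a_i, a_{i+1}\}$ on the path is a cut between $a_0$ and $a_k$, so $\{a_0,a_k\}$ appears as an edge of $B_{a_i a_{i+1}}$; the cover property then forces $\{a_i, a_{i+1}\} \in D_{a_0} \cup D_{a_k}$ for all $i$, and concatenating gives the entire path inside $D_{a_0} \cup D_{a_k}$. Since the cost decomposes as $c(D) = \sum_e |C_e|$ and the per-edge constraints are mutually independent, minimizing $c(D)$ reduces to solving, independently for each tree edge, a minimum vertex cover instance on a bipartite graph.

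Finally, by K\"onig's theorem each such instance is solved via maximum bipartite matching, which Hopcroft--Karp performs in $O(\sqrt{n}\,|R|)$ time; summing over the $n-1$ tree edges yields the claimed $O(n^{1.5}|R|)$ bound. I expect the main obstacle to be the sufficiency direction: one must argue that vertex covers chosen independently on different bipartite graphs combine into a globally feasible dispersal, and this relies critically on the uniqueness of paths in $T$ (so that every request's path passes through a well-defined sequence of cut edges, each of which is locally handled). Once the per-edge equivalence with bipartite vertex cover is in place, the matching-based algorithm and its runtime analysis are routine.
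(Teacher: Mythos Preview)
Your proposal is correct and follows essentially the same approach as the paper: per-edge decomposition into bipartite vertex cover instances via the cut property of tree edges, the necessity/sufficiency argument based on unique paths, and the K\"onig/Hopcroft--Karp runtime analysis. The only addition you make explicit is the double-counting identity $c(D)=\sum_e |C_e|$, which the paper leaves implicit but is indeed what justifies the independent per-edge minimization.
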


\vspace*{-.7cm}
\section{Concluding remarks}
\label{sec:conclusion}
\vspace*{-.3cm}
We have considered undirected variants of the MCD problem with tree structures
and shown that for MCD with tree $R$, the hardness and approximability depend on
the maximum degree of tree $R$ and MCD for any $R$ can 
be solved in polynomial time if $G$ is a tree.

There are interesting open problems as follows;
\begin{itemize}
\item The hardness of MCD-tree($O(\log n)$): Even NP-hardness of that 
class is not proved yet. Precisely, no hardness result is found
for MCD-tree($\Delta_R$) where $\Delta_R=o(n)$ and $\Delta_R=\omega(1)$.
\item The graph class of $G$ allowing any request set $R$ to be tractable:
The case of trees (shown in this paper) is only the known class making the problem solvable in polynomial time. We would like to know what sparse 
graph classes (e.g., rings, series-parallel graphs, and planar graphs) 
can be solved for any request $R$ in polynomial time. In particular, for 
MCD of rings with any request $R$ we would like to decide whether it is 
NP-hard or P.
\item Related to the question right above, we would like to extend
      the DP technique for MCD-tree($O(1)$) presented in Section 
      \ref{sec:consttree} to other wider classes of $H_R$. 
      Some sparse and degree-bounded graphs might be its
      candidates. In fact, the key of polynomial time running time of 
      Algorithm \ref{algo:consttree} is based only on the following two 
      conditions: (1) There exists an optimal solution that
      well-satisfies $R$, (2) There exists an ordering $\sigma$ on $V_R$ 
      such that every cut $(\{\sigma(1),\ldots,\sigma(i)\},
      \{\sigma(i+1),\ldots, \sigma(|V_R|)\})$ on $H_R$ has a constant
      size.  
\item The complexity gap between undirected MCD and directed MCD:
In general, directed MCD is not easier than undirected MCD in the 
sense that the latter is a special case of the former.
But it is unknown whether it is proper or not. It is not quite trivial 
to transform any known complexity result for MCD into directed MCD, and 
vice versa.
\end{itemize}

\vspace*{-.5cm}
\bibliographystyle{abbrv}

\newpage
\appendix

\section{Omitted Proof}

\subsection{The proof of Theorem \ref{theorem:tree}}

\begin{proof}
Let $\mathit{Opt}(G, R)$ be an optimal solution of $(G, R)$, and
$\mathcal{A}(G, R)$ be the solution of $(G, R)$ returned by algorithm
$\mathcal{A}$. Installing $\rho$-approximation algorithm of MCD-star
into $\mathcal{A}$, we can obtain $\rho$-approximated solutions 
of $(G, R^{1})$ and $(G, R^{0})$ because each connected component 
of $V_R^{1}$ and $V_R^{0}$ is a star (trivially, the set of 
$\rho$-approximated solutions corresponding to each connected components
induces an $\rho$-approximated solution of the whole instance). 
Thus, we have $c(\mathcal{A}(G, R^{j})) \leq \alpha c(\mathit{Opt}(G, R^{j}))$
($j \in \{0, 1\})$. Furthermore, since $R^{j} \subseteq R$ holds
for any $j \in \{0, 1\}$, we also have $c(\mathit{Opt}(G, R^{j})) \leq c(\mathit{Opt}(G, R))$. Letting $S$ be the solution of $(G, R)$ finally 
returned and $c_\mathrm{max} = \max\{c(\mathit{Opt}(G, R^{1})), c(\mathit{Opt}(G,
R^{0}))\}$, we finally obtain $c(S) \leq c(\mathcal{A}(G, R^{1}))) + 
c(\mathcal{A}(G, R^{0})) \leq 2\alpha c_\mathrm{max} \leq 2\alpha
\mathit{Opt}(G, R)$. The theorem is proved. \qed
\end{proof}

\subsection{The proof of Lemma \ref{lemma:connectivity}}

\begin{proof}
The proof is done in a constructive way. That is, we show that 
it is possible to transform any optimal solution to one well-satisfying 
all requests with no extra cost. Let $D$ be an optimal solution, 
$U$ be the set of vertices having at least one request not well-satisfied, 
and $u$ be the vertex farthest from $r$ in $U$.  Since $u$ is the farthest, 
only the request between $u$ and its parent is not well-satisfied
in all requests related to $u$. Let $v=\mathit{par}(u)$ for short. To prove 
the lemma, it suffices to show that we can obtain a solution 
$D'$ where $c(D) = c(D')$ holds, any request well-satisfied in $D$ is 
also done in $D'$, and $\{u, v\}$ is well-satisfied. 
Let $e_0, e_1, \cdots e_k$ be the sequence of edges in $G$ 
organizing a path from $u$ to $v$. From the fact that $\{u, v\}$ 
is not well-satisfied, there exists an edge $e_j \in D_{u}$ such that 
$e_{p} \in D_{v}$ for some $p < j$ and $e_{q} \in D_{v}$ for any $q > j$.
Since request $\{u, u'\}$ is well-satisfied for any $u' \in \mathit{Child}(u)$,
there is a path $P_{u'}$ in $D_u$ from $u$ to $\alpha_{u, u'}$ in $D_u$. 
Then, for any $u' \in \mathit{Child}(u)$, each $P_{u'}$ does not contain $e_j$
because $\{u, v\}$ becomes well-satisfied if $e_j \in P_{u'}$ holds for 
some $u'$ (see Figure \ref{fig:connectivity}). Thus, we can construct 
a dispersal $D'$ as $D'_{x} = D_{x}$ for any $x \neq u, v$, 
$D'_{u} = D_{u} \setminus \{e_j\}$ and  $D'_{v} = D_{v} \cup \{e_j\}$, 
which is feasible and has the same cost as $D$. Repeating 
the construction, we can make request $\{u, v\}$ well-satisfied. Since 
this procedure does not break the well-satisfied property of any other 
request, we can eventually obtain a feasible solution well-satisfying
all requests without extra cost. The lemma is proved. \qed
\end{proof}

\begin{figure}[t]
\begin{center}
\includegraphics[keepaspectratio,width=85mm]{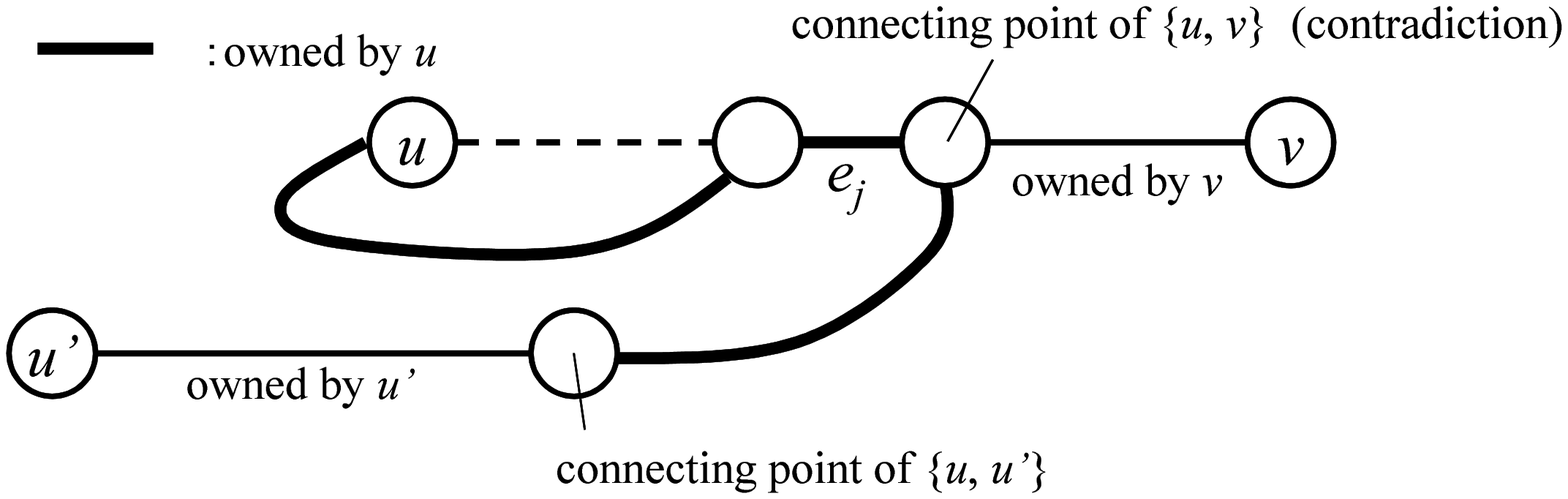}
\caption{Illustration of the proof of Lemma \ref{lemma:connectivity}:
If $e_j \in D_u$ is used to satisfy request $\{u, u'\}$, $D_u$ contains
a path terminating with $e_j$ because $\{u, u'\}$ is well-satisfied. It
follows that request $\{u, v\}$ becomes well-satisfied, which is a
contradiction.}
\label{fig:connectivity}
\end{center}
\end{figure}

\subsection{The Proof of Lemma \ref{lemma:reccursion}}

\begin{proof}
Let $\alpha'_k$ be the connecting point of $\{u, u_k\}$ in 
$\mathit{D}^{\ast}(u, \alpha)$, and $D^{\ast}_u$ be $u$'s
local dispersal in $D^{\ast}(u, \alpha)$. To prove the lemma, 
it suffices to show that the right-hand expression is equal to the left 
for $A = (\alpha'_1, \alpha'_2,\cdots, \alpha'_{\gamma(u)})$. Since 
$D^{\ast}_u$ has a path to any vertex $\alpha'_k \in A$, the edge-induced 
subgraph by $D^{\ast}_u$ is one connecting all vertices in $A \cup \{u, \alpha\}$.
That is, it is a feasible solution for instance 
$(G, E_A \cup \{\{u, \alpha\}\})$, and thus we have $|D^{\ast}_u| 
\geq c(D^{\mathit{Opt}}(G, E_A \cup \{\{u, \alpha\}\}))$. Combining
the optimality of $D^{\ast}(u_k, \alpha'_k)$ for any
$u_k \in \mathrm{Child}(u)$, we can conclude that the right-hand
is equal to the optimal cost $c(D^{\ast}(u, \alpha))$.
\qed
\end{proof}

\end{document}